\providecommand{\openone}{\leavevmode\hbox{\small1\kern-3.8pt\normalsize1}} 
\newcommand{\half}{\frac{1}{2}}
\definecolor{nblue}{rgb}{0.2,0.2,0.7}
\definecolor{ngreen}{rgb}{0.2,0.6,0.2}
\definecolor{nred}{rgb}{0.8,0.2,0.2}
\definecolor{nblack}{rgb}{0,0,0}
\newcommand{\blk}{\color{nblack}}
\newcommand{\unit}{\mathbf{1}}
\renewcommand{\mod}[2]{\left[#1\right]_{#2}}
\renewcommand{\L}{{L}}
\renewcommand{\S}{{S}}
\newcommand{\Q}{{Q}}
\newcommand{\B}{{B}}
\newcommand{\bounds}[4]{\beta^{#1}_{#2,#3,#4}}
\newcommand{\bound}[1]{\beta^{#1}_{n,m,k}}
\newcommand{\boundsf}[5]{\beta^{#1}_{#2,#3,#4;\, #5}}
\newcommand{\boundf}[2]{\beta^{#1}_{n,m,k;\, #2}}
\newcommand{\GUBI}[3]{{\cal I}_{#1,#2,#3}}
\newcommand{\GUBIg}{\GUBI{n}{m}{k}}
\newcommand{\GGUBI}[3]{\Omega_{#1,#2,#3;\, f}}
\newcommand{\GGUBIf}[4]{\Omega_{#1,#2,#3;\, #4}}
\newcommand{\GGUBIg}{\GGUBI{n}{m}{k}}
\newcommand{\sums}{\mathbf{s}} 
\newcommand{\sumr}{\mathbf{r}_{\vec{s}}} 
\newcommand{\sumrP}{\mathbf{r}_{\vec{s}'}'} 
\newcommand{\f}{f}
\newcommand{\g}{g}
\newcommand{\s}{s}
\renewcommand{\r}{r}
\newtheorem{lemma}{Lemma}
\begin{document}
\title{A framework for the study of symmetric full-correlation Bell-like inequalities}

\author{Jean-Daniel Bancal}
\affiliation{Group of Applied Physics, University of Geneva, CH-1211 Geneva 4, Switzerland}

\author{Cyril Branciard}
\affiliation{School of Mathematics and Physics, The University of Queensland, St Lucia, QLD 4072, Australia}

\author{Nicolas Brunner}
\affiliation{H.H. Wills Physics Laboratory, University of Bristol, Tyndall Avenue, Bristol, BS8 1TL, United Kingdom}

\author{Nicolas Gisin}
\affiliation{Group of Applied Physics, University of Geneva, CH-1211 Geneva 4, Switzerland}

\author{Yeong-Cherng Liang}
\affiliation{Group of Applied Physics, University of Geneva, CH-1211 Geneva 4, Switzerland}

\date{\today}

\begin{abstract}
Full-correlation Bell-like inequalities represent an important subclass of Bell-like inequalities that have found applications in both a better understanding of fundamental physics and in quantum information science. Loosely speaking, these are inequalities where only measurement statistics involving all parties play a role. In this paper, we provide a framework for the study of a large family of such inequalities that are symmetrical with respect to arbitrary permutations of the parties.  As an illustration of the power of our framework, we derive $(i)$ a new family of Svetlichny inequalities for arbitrary numbers of parties, settings and outcomes, $(ii)$ a new family of two-outcome device-independent entanglement witnesses for genuine $n$-partite entanglement and $(iii)$ a new family of two-outcome Tsirelson inequalities for arbitrary numbers of parties and settings. We also discuss briefly the application of these new inequalities in the characterization of quantum correlations.
\end{abstract}

\maketitle

\section{Introduction}
\label{Sec:Intro}

Bell inequalities~\cite{bell} play a central role in quantum physics and in quantum information~\cite{R.F.Werner:QIC:2001}. Initially discovered in the context of foundational research on quantum correlations, they are today used in a wide range of protocols for quantum information processing. For instance, they are naturally associated with communication complexity~\cite{CC}, and are the key ingredient in device-independent quantum information processing~\cite{Mayers_Yao,DIQKD_PRL,rand_pironio,rand_colbeck,DISE,EntMeas,diew}. Thus, developing and harnessing Bell inequalities is fundamental towards a deeper understanding of the foundations of quantum mechanics, as well as for applications in quantum information.

The most famous and widely-used Bell inequality is due to Clauser-Horne-Shimony-Holt (CHSH)~\cite{Bell:CHSH}. The CHSH scenario, which is the simplest nontrivial Bell scenario, involves two parties each performing two possible binary-outcome measurements. Denoting by $x$ and $y$ the measurement settings of Alice and Bob respectively, and by $A_x$ and $B_y$ their measurement outcomes, the CHSH inequality reads $E_{11}+E_{12}+E_{21}-E_{22} \leq 2$, where the two-party correlators $E_{xy}$ are defined as $E_{xy}=P(A_x{=}B_y)-P(A_x{\neq}B_y)$\footnote{Throughout the paper, the notation $P$ denotes probabilities.}. Clearly, the value of the correlator does not depend on the individual values of Alice's and Bob's outcomes, but rather on how $A_x$ and $B_y$ relate to each other. Since the CHSH inequality is expressed in terms of these correlators only, it is said to be a correlation Bell inequality.

It is natural and useful to investigate Bell tests beyond CHSH. Bell scenarios can indeed involve in general an arbitrary number of parties, each party  having an arbitrary number of measurement settings, and each of the corresponding measurements an arbitrary number of possible outcomes. Here we denote by the triple $(n,m,k)$ a Bell scenario where $n$ parties all have $m$ possible measurement settings with $k$ possible outcomes. Correlation Bell inequalities can also be naturally  defined in these situations and represent powerful tools for investigating nonlocality (see, e.g. Refs.~\cite{WW:2001,Hoban:2011}). 

In this regard, we will refer to a $k$-valued function of all parties' measurement outcomes as a {\em full-correlation function} if the function can still take on all $k$ possible values even when all but one of the parties' outcomes (for given measurement settings) are fixed. A full-correlation Bell inequality is then one that can be written as a linear combination of probabilities associated with full-correlation function taking particular values. These inequalities are natural generalizations of the Bell-correlation inequalities considered by Werner and Wolf in Ref.~\cite{WW:2001} to arbitrary number of measurement outcomes. In this paper, we shall consider specifically inequalities where the full-correlation function involved is the sum (modulo $k$) of all parties' measurement outcomes. 

Up until now, several families of full-correlation Bell inequalities have been discovered for specific cases. First, for the multi-input $(2,m,2)$ case, Pearle, followed by Braunstein and Caves, introduced the chained Bell inequalities~\cite{chainedBI}. In the multipartite $(n,2,2)$ case, the CHSH inequality has then been generalized by Mermin and further developed by Ardehali, Belinski\v{\i} and Klyshko (MABK)~\cite{MerminIneq}. In fact, a complete characterization of all the $2^{2^n}$ full-correlation Bell inequalities present in this scenario was later achieved by Werner and Wolf~\cite{WW:2001}, and independently by \.Zukowski and Brukner~\cite{ZB:2002}. For the $(2,2,k)$ case, Collins-Gisin-Linden-Massar-Popescu (CGLMP) derived correlation inequalities for scenarios with arbitrary number of measurement outcomes~\cite{CGLMP} (see also Ref.~\cite{Kaslikowski}). Finally, Barrett-Kent-Pironio (BKP) presented in Ref.~\cite{BKP} Bell inequalities for the $(2,m,k)$ case, unifying the CGLMP and the chained Bell inequalities.

Beyond standard Bell inequalities, other types of inequalities are worth considering. These include Tsirelson inequalities~\cite{B.S.Tsirelson:LMP:1980}, which are satisfied by all quantum correlations; Svetlichny inequalities~\cite{svet87}, which can be used to detect genuine multipartite nonlocality; and device-independent entanglement witnesses (DIEWs)~\cite{diew,diewTV}, which detects genuine multipartite entanglement even with Svetlichny-local correlations. We shall refer to all these inequalities as Bell-like inequalities.

For detecting genuine multipartite nonlocality, Collins {\em et al.}~\cite{collins02} and Seevinck-Svetlichny~\cite{Seevinck} have derived full-correlation Svetlichny inequalities for the $(n,2,2)$ case, generalizing those in Ref.~\cite{svet87}. Just like the BKP Bell inequalities, which can be seen as generalization of the chained Bell inequalities to more outcomes, a generalization of the Svetlichny inequalities~\cite{collins02,Seevinck} to the $(n,2,k)$ scenario was also achieved in Ref.~\cite{BBGL}, effectively unifying the CGLMP inequality and the generalized Svetlichny inequalities of Refs.~\cite{collins02,Seevinck}. 

Since all the aforementioned families of Bell-like inequalities reduce to CHSH for $n{=}m{=}k{=}2$, a natural question that one may ask is whether it is possible to unify all these inequalities into a single family of mathematical expression (henceforth referred as Bell expression) for the general $(n,m,k)$ scenario (see Fig.~\ref{Fig:GUBI}). In this paper, we provide an affirmative answer to this question.

To achieve this, we will start in Sec.~\ref{Sec:Framework} by presenting a unified Bell  expression  that, together with the appropriate bound,  reduces to all the Bell-like correlation inequalities mentioned in the last paragraphs as limiting cases. This effectively provides a unified framework for the study of a large family of full-correlation Bell-like inequalities. After that, in Sec.~\ref{Sec:Construction}, we discuss how new multipartite Tsirelson inequalities, Svetlichny inequalities and DIEWs can be constructed within our framework, starting from the respective bipartite and tripartite bounds. Explicit examples of such Bell-like inequalities are then presented. We then conclude in Sec.~\ref{Sec:Conclusion} with some possible avenues for future research.

\section{A framework for symmetric full-correlation Bell-like inequalities}
\label{Sec:Framework}

\subsection{A unified Bell expression  }
\label{Sec:GUBI}

In this section, we present a unified Bell  expression   that reduces to various known Bell  expressions   as special cases. For definiteness, let us label the measurement settings (inputs) for the $i$-th party as $s_i=0,1,\ldots,m-1$ and denoted the corresponding outcome by $r_{s_i}=0,1,\ldots,k-1$. For convenience, we will also write $\vec{s}=(s_1,s_2,\ldots,s_n)$, 
and define the sums of all parties' inputs and outputs, respectively, as $\sums=\sum_{i=1}^n s_i$ and $\sumr=\sum_{i=1}^{n} r_{s_i}$. Finally, for any integers $X$ and $d$, we denote the value of $X$ modulo $d$ by $[X]_{d} \in \{0,\ldots,d-1\}$.

With these notations, let us define the following Bell  expression:
\begin{equation}\label{eq:GUBI}
\begin{split}
	\hspace{-2mm} \GUBIg &= \sum_{\vec{s} \ {\mathrm{s.t.}} \, \mod{\sums}{m}=0} \, \sum_{r=0}^{k-1} \, \left[r \, - \left\lfloor\frac{\sums}{m}\right\rfloor\right]_k P([\sumr]_k=r) \\[-1mm]
	& \ \ + \!\!\! \sum_{\vec{s} \ {\mathrm{s.t.}} \, \mod{\sums}{m}=1} \, \sum_{r=0}^{k-1} \, \left[-r \, + \left\lfloor\frac{\sums}{m}\right\rfloor\right]_k P([\sumr]_k=r) ,
\end{split}
\end{equation}
where $\lfloor \cdot \rfloor$ is the floor function. $\GUBIg$ is clearly a full-correlation Bell  expression, with the full-correlation function involved being the sum, modulo $k$ of all parties' outputs, i.e., $[\sumr]_k$; furthermore, since for a given choice of settings $\vec s$, it only depends on the sum of inputs $\sums$ and the sum of outputs $\sumr$, $\GUBIg$ is symmetric under any permutation of parties.

The  expression  $\GUBIg$ unifies a few important classes of { full-}correlation Bell  expressions, as illustrated in Fig.~\ref{Fig:GUBI}: for the $(2,m,k)$ case, it reduces to the one appearing in the BKP inequalities (which, in turn, contains both the CGLMP inequalities and the chained Bell inequalities as special cases~\cite{BKP}); for the $(n,2,k)$ case it reduces to the generalized Svetlichny  expression of Ref.~\cite{BBGL} (which contains the  expressions  of Refs.~\cite{collins02,Seevinck} as special cases); for the $(n,3,2)$ case, it reduces to the DIEW of Ref.~\cite{diew}. For details on how these known Bell  expressions  are recovered from $\GUBIg$ (and for an alternative way of writing $\GUBIg$), see Appendix~\ref{App:Reduction}.

\begin{figure}[h!]
 \includegraphics[scale=.08]{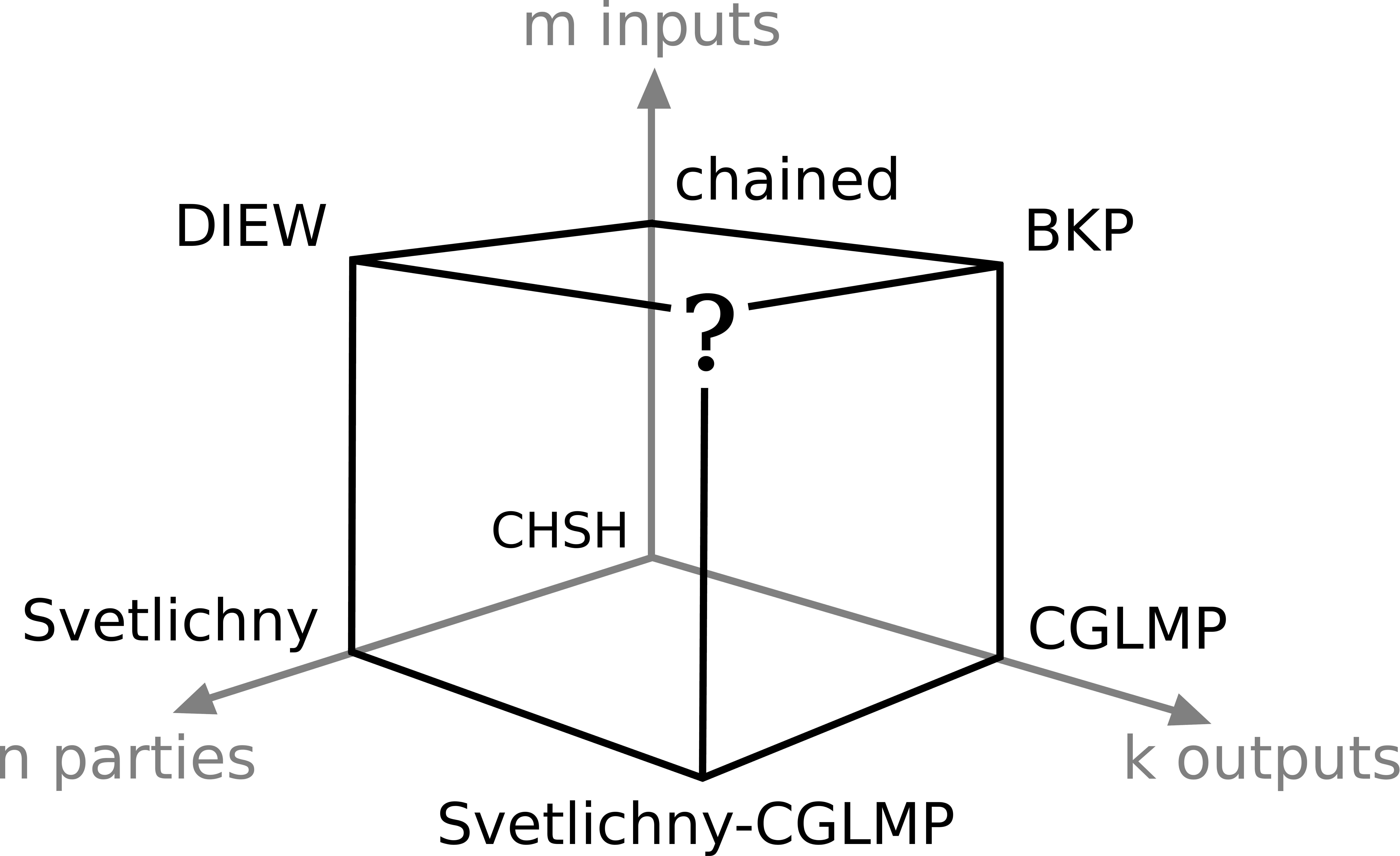}
 \caption{\label{Fig:GUBI} Previously known families of Bell  expressions  are recovered from the $\GUBIg$  expression  of Eq.~(\ref{eq:GUBI}) when $n,m$ or $k$ = 2 (see text); the CHSH expression is recovered, in particular, when $n=m=k=2$. $\GUBIg$ generalizes these  expressions  for $n,m,k>2$, thus completing the vertex ``?'' of the cube depicted above. An even more general full-correlation Bell  expression  is $\GGUBIg$, defined in Eq.~(\ref{eq:GGUBI}) below. }
\end{figure}

\subsection{From Bell  expressions  to Bell-like inequalities}
\label{Sec:Bounds}

Clearly, as it is, $\GUBIg$ is only a linear combination of probabilities. In order to make use of it in, say, entanglement detection, we need to specify the appropriate bounds that depend on the situation of interest.

For example, in a theory where only shared randomness is allowed\footnote{Such theories are also commonly referred to as local (hidden variable) theories.}, one would have
\begin{equation}\label{Eq:GUBI:Local}
	 \GUBIg \stackrel{\L}{\ge} \bound{\L},
\end{equation}
where the {\em local bound} $ \bound{\L}$ is the lower bound of the Bell  expression  admissible within such a theory.\footnote{For simplicity of presentation, we will only discuss the lower bounds on the Bell expressions. Clearly, one can also discuss the upper bounds on $\GUBIg$ analogously.} Here, we have used the symbol ``$\L$'' to remind that the inequality is a constraint that has to be satisfied by a locally causal theory~\cite{bell}; analogous notations will be adopted in all subsequent discussions. 

Inequality~\eqref{Eq:GUBI:Local} is generally called a \emph{Bell inequality}. The CHSH inequality~\cite{Bell:CHSH}, the Pearle-Braunstein-Caves chained inequalities~\cite{chainedBI}, the CGLMP inequalities~\cite{CGLMP}, and the BKP inequalities~\cite{BKP} in Fig.~\ref{Fig:GUBI} are examples of such inequalities that can be written explicitly as:
\begin{subequations}
\begin{gather}
	\GUBI{2}{2}{2} \stackrel{\L}{\ge} 1,\quad
	\GUBI{2}{m}{2} \stackrel{\L}{\ge} 1,\\
	\GUBI{2}{2}{k} \stackrel{\L}{\ge} k-1,\quad
	\GUBI{2}{m}{k} \stackrel{\L}{\ge} k-1.
	\label{Eq:BKP}
\end{gather}
\end{subequations}
The violation of a Bell inequality is a signature of Bell-nonlocality.

Likewise, in a multipartite scenario, one could be interested in detecting genuine multipartite nonlocality (also known as true $n$-body nonseparability~\cite{svet87}). In this case, it is necessary to establish the {\em Svetlichny bound} of $\GUBIg$, which we shall denote by $\bound{\S}$. One can then write down a {\em Svetlichny inequality} in terms of $\GUBIg$ as:
\begin{equation}\label{Eq:GUBI:Svet}
	\GUBIg \stackrel{\S}{\ge} \bound{\S}.
\end{equation}
The inequalities due to Collins {\em et al.}~\cite{collins02} as well as Seevinck-Svetlichny~\cite{Seevinck}, and that presented in Ref.~\cite{BBGL} are inequalities of this kind that can be written explicitly as:
\begin{gather}
	\GUBI{n}{2}{2} \stackrel{\S}{\ge} 2^{n-2}\quad{\rm and}\quad
	\GUBI{n}{2}{k} \stackrel{\S}{\ge} 2^{n-2}(k-1).
\end{gather}
From~\cite{diew}, it also follows that $\GUBI{n}{3}{2} \stackrel{\S}{\ge} 3^{n-2}$.

The quantum violation of a Svetlichny inequality is a sufficient condition for genuine multipartite entanglement. However, for the detection of such entanglement, it already suffices to violate the weaker constraint given by a {\em DIEW}~\cite{diew}, which we can write in the context of $\GUBI{n}{m}{k}$ as:
\begin{equation}\label{Eq:GUBI:BS}
	\GUBIg \stackrel{\B}{\ge} \bound{\B},
\end{equation}
where $\bound{\B}$ is the {\em quantum biseparable bound} on $\GUBIg$. In this notation, the DIEW of Ref.~\cite{diew} can be written as:
\begin{equation}\label{Eq:DIEW:GUBI}
	\GUBI{n}{3}{2} \stackrel{\B}{\ge} 3^{n-2}(3-\sqrt{3}).
\end{equation}

Note finally that for any given scenario $(n,m,k)$, the set of probability distributions allowed in quantum theory is bounded and thus, the Bell expression $\GUBI{n}{m}{k}$ is also restricted in quantum theory to:
\begin{equation}\label{Eq:GUBI:Q}
	 \GUBIg \stackrel{\Q}{\ge} \bound{\Q}.
\end{equation}
Such an inequality is often referred to as a \emph{Tsirelson inequality}, whereas the bound $\bound{\Q}$ is known as a {\em Tsirelson bound}~\cite{B.S.Tsirelson:LMP:1980}. For the CHSH  expression (corresponding to $n{=}m{=}k{=}2$),  for instance , one has
\begin{equation}\label{Eq:Tsirel:CHSH}
	\GUBI{2}{2}{2} \stackrel{\Q}{\ge} 2-\sqrt{2}.
\end{equation}

Note that for general Bell  expressions, these lower bounds obey the following set of inequalities:
\begin{equation}
\begin{split}
	\bound{\Q},\, \bound{\S}\le\bound{\B}\le\bound{\L},
\end{split}
\end{equation}
but the bounds arising from the quantum set and the Svetlichny constraints are not necessarily comparable. For example, three parties that share only  a Popescu-Rohrlich~\cite{PR} box between two of them can clearly generate non-quantum but Svetlichny-local correlations. Conversely, there are Svetlichny inequalities that can be violated quantum mechanically.

\subsection{Generalization to include other Bell expressions}
\label{Sec:GGUBI}

While $\GUBIg$ already embraces a large number of known Bell  expressions, it can actually be further generalized to include an even larger family of Bell expressions. To this end, let us now define:
\begin{equation}\label{eq:GGUBI}
	\GGUBIg = \sum_{\vec{s}} \sum_r \f\!\left([\sums]_m,\left[r-\left\lfloor\frac{\sums}{m}\right\rfloor\right]_k\right) P([\sumr]_k = r) \,,
\end{equation}
where the first sum runs over all possible combinations of settings $\vec{s}$, the second sum runs from $r=0$ to $k-1$, and where $\f:\{0,\ldots,m-1\}\times\{0,\ldots,k-1\}\to\mathbb{R}$ is a real-valued function (defined by $m \times k$ real parameters) that fully characterizes $\GGUBIg$.

As with $\GUBIg$, $\GGUBIg$ is clearly a symmetric, full-correlation Bell  expression. Note the specific form of the arguments of $f(s,r)$, and how the sums $\sums$ and $\sumr$ of inputs and outputs play different roles, with $\sums$ appearing also in the second argument through the quantity $\left\lfloor\frac{\sums}{m}\right\rfloor$. This very term, which is responsible  for the minus sign in the CHSH  expression, turns out to be crucial for the computation of multipartite bounds on $\GGUBIg$ (see next section). 

The Bell expressions $\GUBIg$ introduced previously simply correspond to the choice
\begin{equation}
	\f(\s,\r)= \f_{\cal I}(\s,\r)=
\begin{cases}
\r & \text{if }\s=0 , \\
[-\r]_k & \text{if }\s=1 , \\
0 & \text{otherwise} ,
\end{cases}\ \ \ \label{Eq:f:GUBI}
\end{equation}
so that $\GUBIg = \GGUBIf{n}{m}{k}{f_{\cal I}}$.
Expression~\eqref{eq:GGUBI} is thus more general than~\eqref{eq:GUBI}. However,  not all symmetric full-correlation Bell  expressions  can be put in the form~\eqref{eq:GGUBI}\footnote{For instance, for $n{=}m{=}k{=}2$, the trivial (single-term)  expression  $P(r_{s_1=0} = r_{s_2=0})$ is not of the form~\eqref{eq:GUBI}, since in $\GGUBIf{2}{2}{2}{f}$ the term $P(r_{s_1=1} \neq r_{s_2=1})$ must also come with the same coefficient $f(0,0)$.}.

Still,  the generalized  expression  $\GGUBIg$ now includes some other known families of Bell  expressions  (up to relabellings of inputs and outputs, and possibly affine transformations), such as those appearing in the MABK Bell inequalities~\cite{MerminIneq} and the DIEWs from Ref.~\cite{diewTV}. The parameters leading to these inequalities are summarized in Table~\ref{tbl:KnownBI}.
\begin{center}
\begin{table}[h!]
\begin{tabular}{c|c|c||c||c}
$n$ & $m$ & $\ k \ $ & $\f(\s,\r)$\,\footnote{Notations: $\delta_{\s,0}$ is the Kronecker delta (such that $\delta_{\s,0} = 1$ if $s=0$, $\delta_{\s,0} = 0$ otherwise), and $\Delta$ can be any arbitrary real number.} & Bell  expression  \\ \hline
 $\ge 3$, odd\footnote{Note that the MABK expressions --- often referred to as the Mermin expressions --- are identical to the Svetlichny-Bell  expressions  of Refs.~\cite{collins02,Seevinck} for even $n$. They are thus already recovered by $\GUBI{n}{2}{2}$.} & 2 & 2 & $\delta_{\s,0}\cdot\r$ & MABK~\cite{MerminIneq} \\ \hline
 $\ge 3$ & $\ge 2$ & $ 2$ & $\delta_{\s,0}\cdot\r$ & DIEW~\cite{diewTV} \\ \hline
 $\ge 3$ & $\ge 2$ & $ 2$ & $\cos(\frac{\s-\Delta}{m}\pi)\cdot\r$ & DIEW~\cite{diewTV} \\ \hline
\end{tabular}
\caption{\label{tbl:KnownBI} A summary of some other known Bell  expressions  that can be recovered as special cases of $\GGUBIg$.
}
\end{table}
\end{center}
On top of these, there are a handful of other known bipartite two-output Bell inequalities that are of the form $\GGUBIf{2}{m}{2}{f}$. 
Some of these examples can be found in Eq.~(5) of Ref.~\cite{gisin} and in its Appendix A, as well as in Ref.~\cite{JD:JPA:2010}. 

As mentioned above, in order for Bell  expressions  to be useful in practice, one needs to determine their relevant bounds, so that
\begin{subequations}
\begin{gather}
	\GGUBIg \ \stackrel{\L}{\ge} \ \boundf{\L}{f}, \label{Eq:boundsf_L} \\
	\GGUBIg \ \stackrel{\S}{\ge} \ \boundf{\S}{f}, \label{Eq:boundsf_S} \\
	\GGUBIg \ \stackrel{\B}{\ge} \ \boundf{\B}{f}, \label{Eq:boundsf_B} \\
	\GGUBIg \ \stackrel{\Q}{\ge} \ \boundf{\Q}{f}, \label{Eq:boundsf_Q}
\end{gather}
\end{subequations}
where the various bounds depend on the choice of function $f$. In the next section we show how, starting from bipartite bounds on $\GGUBI{2}{m}{k}$, one can construct bounds on $\GGUBIg$ to obtain multipartite Bell-like inequalities.

\section{From bipartite to multipartite bounds: how to generate new Bell-like inequalities}
\label{Sec:Construction}

Determining the local bound or any of the other bounds described in Sec.~\ref{Sec:Bounds} for a given Bell  expression  is in general a highly nontrivial problem. Nonetheless, we will demonstrate in what follows that once the corresponding Tsirelson and local bounds for the bipartite  expression  $\GGUBI{2}{m}{k}$ are known (for any given choice of $\f$), one can immediately write down, respectively, a Tsirelson inequality and a Svetlichny inequality for $\GGUBIg$ (for the same choice of $\f$). Analogously, we will also demonstrate, in the particular case where $k=2$ and where the function $f$ takes the form $f(s,r) = g(s) \cdot r$, how a quantum biseparable bound on $\GGUBI{n}{m}{2}$ can be obtained by solving a simple optimization problem, for a given $m$ and a given function $g(s)$.

Our starting point is to note that for any $n\geq2$, one can rewrite $\GGUBIg$ as a sum of $m$  expressions  involving effectively one less party. More precisely, let us decompose $\GGUBIg$ as:
\begin{eqnarray}
	&& \hspace{-5mm} \GGUBIg \, = \, \sum_{s_n = 0}^{m-1}\GGUBI{n-1}{m}{k}^{(s_n, r_{s_n})} \,, \label{Eq:S1} \\[1mm]
	&& \hspace{-5mm} \textrm{with} \ \ \GGUBI{n-1}{m}{k}^{(s_n, r_{s_n})} \nonumber \\
	&& =  \!\!\! \sum_{s_1,\ldots,s_{n-1}} \!\!  \sum_r \f\!\left([\sums]_m,\left[r-\left\lfloor\frac{\sums}{m}\right\rfloor\right]_k\right) P([\sumr]_k = r) . \qquad \label{Eq:S2a}
\end{eqnarray}
Defining
\begin{equation}
\begin{array}{rcl}
	s_1^{(s_n)} &=& [s_1+s_n]_m \, = \, s_1+s_n - \left\lfloor\frac{s_1+s_n}{m}\right\rfloor m \,, \\[1mm]
	\sums' &=& s_1^{(s_n)}+\sum_{i=2}^{n-1} s_i \, = \, \sums - \left\lfloor\frac{s_1+s_n}{m}\right\rfloor m \,, \\[2mm]
	r_{s_1}^{(s_n,r_{s_n})} &=& \mod{r_{s_1}+r_{s_n}-\left\lfloor\frac{s_1+s_n}{m}\right\rfloor}{k} \,, \\[1mm]
	\sumrP &=& r_{s_1}^{(s_n,r_{s_n})} + \sum_{i=2}^{n-1} r_{s_i} \,,
\end{array}
\label{eq:substitutions}
\end{equation}
we obtain
\begin{eqnarray}
	&& \hspace{-5mm} \GGUBI{n-1}{m}{k}^{(s_n, r_{s_n})} \nonumber \\
	&& \hspace{-5mm} =  \!\!\! \sum_{s_1^{(s_n)},s_2,\ldots,s_{n-1}} \!\!  \sum_{r'} \f\!\left([\sums']_m,\left[r'-\left\lfloor\frac{\sums'}{m}\right\rfloor\right]_k\right) P([\sumrP]_k = r') . \nonumber \\[-3mm] \label{Eq:S2}
\end{eqnarray}

Thus every term $\GGUBI{n-1}{m}{k}^{(s_n, r_{s_n})}$ appearing in the decomposition~\eqref{Eq:S1} is of the general form~\eqref{eq:GGUBI} for the $n-1$ first parties, and for the same function $f(s,r)$. This implies that, for all given $s_n$ and $r_{s_n}$, $\GGUBI{n-1}{m}{k}^{(s_n, r_{s_n})}$ defines an $(n-1)$-partite Bell expression.

The invariance of $\GGUBIg$ under permutation of the parties implies that the same decomposition can be carried out for any of the other parties. Bearing these in mind, we are now ready to construct some nontrivial multipartite Bell-like inequalities in terms of their bipartite bounds.

\subsection{Tsirelson inequalities}
\label{Sec:Tsirelson}

To derive a Tsirelson inequality for a general multipartite scenario, one can make use of Eq.~\eqref{Eq:S1} recursively and apply inequality~\eqref{Eq:boundsf_Q} for $n=2$. This leads to
\begin{equation}\label{Ineq:Tsirelson:General}
	\GGUBIg \ \stackrel{\tiny \Q}{\ge} \ m^{n-2} \, \boundsf{\Q}{2}{m}{k}{f} \,,
\end{equation}
which is an $n$-partite Tsirelson inequality obtained as a function of the bipartite Tsirelson bound $\boundsf{\Q}{2}{m}{k}{f}$.

\subsection{Svetlichny inequalities}
\label{Sec:Svet}

To derive a Svetlichny bound for the general $(n,m,k)$ scenario, we consider a Svetlichny scenario in which $n-1$ parties are separated into two groups. By hypothesis, cf. Eq.~\eqref{Eq:boundsf_S}, the value of $\GGUBI{n-1}{m}{k}^{(s_n,r_{s_n})}$ for any given value of $s_n$ and $r_{s_n}$ is restricted by the Svetlichny bound $\boundsf{\S}{n-1}{m}{k}{f}$.
Let us then introduce a new party (labeled by ``$n$''), and (without loss of generality\footnote{This follows from the possibility to perform analogous decomposition as in Eqs.~(\ref{Eq:S1};\ref{Eq:S2}) for any other party.}) let it join the same group as the first party; this does not change the total number of groups. 
Eq.~\eqref{eq:substitutions} and Eq.~\eqref{Eq:S2} can then be interpreted as follows: since the first and the $n^{\text{th}}$ parties are in the same group, they can collaborate and thus the $n^{\text{th}}$ party can communicate to the first party his/her input and output (and vice versa). The first party can thus define new effective inputs $s_1^{(s_n)}$ and outputs  $r_{s_1}^{(s_n,r_{s_n})}$ as in~Eq.~\eqref{eq:substitutions}:  this allows the first party to account for every possible strategy of the new party. We thus see that in this new scenario, we must also have\footnote{If the new party does not join any of the existing groups, the $n$ parties can clearly only do worse in terms of minimizing $\GGUBI{n-1}{m}{k}^{(s_n, r_{s_n})}$.}
\begin{equation}
	 \GGUBI{n-1}{m}{k}^{(s_n, r_{s_n})} \ \stackrel{\tiny \S}{\ge} \ \boundsf{\S}{n-1}{m}{k}{f}.
\end{equation}
By repeating the above argument recursively and noting that $\boundsf{\S}{2}{m}{k}{f}=\boundsf{\L}{2}{m}{k}{f}$, i.e., that the Svetlichny and local bounds coincide for $n=2$, we thus obtain the Svetlichny inequalities:
\begin{equation}\label{Eq:Svet:Final}
	\GGUBIg \ \stackrel{\tiny \S}{\ge} \ m^{n-2} \, \boundsf{\L}{2}{m}{k}{f}.
\end{equation}

Note that the bound corresponding to the situation in which the $n$ parties are separated into $\textsc{g}$ groups~\cite{JD:QuantifyNonlocality} can be derived in a similar way, from the local bound of $\GGUBI{\textsc{g}}{m}{k}$.

\subsection{Two-output DIEWs}
\label{Sec:DIEW}

Consider now the case where the outputs are binary ($k=2$), and $\f(\s,\r)=\g(\s)\cdot\r$ for some function $\g:\{0,\ldots,m-1\}\to\mathbb{R}$ (as in the examples of Table~\ref{tbl:KnownBI} for instance). The probabilities $P([\sumr]_2=r)$ appearing in $\GGUBIf{n}{m}{2}{g\cdot r}$
can in this case be expressed in terms of the commonly used $n$-partite correlators\footnote{The correlator $E_{\vec{s}}$ can be seen as the average value of the product of experimental outcomes, when these are labeled by $\pm1$.} $E_{\vec{s}}= P([\sumr]_2=0) - P([\sumr]_2=1)$, so that
\begin{equation}
	P([\sumr]_2=r) = \half\left[1+(-1)^r E_{\vec{s}}\right] \,.
\end{equation}
We then obtain
\begin{eqnarray}
&& \hspace{-8mm} \GGUBIf{n}{m}{2}{g.r} \ = \ \half \sum_{\vec{s}} g\!\left([\sums]_m\right) \left[1-(-1)^{\left\lfloor\frac{\sums}{m}\right\rfloor} E_{\vec{s}}\right] \nonumber\\
&& \hspace{-5mm} = \ \half \left[ m^{n-1} \! \sum_{s=0}^{m-1} g(s) \, - \, \sum_{\vec{s}} g\!\left([\sums]_m\right) (-1)^{\left\lfloor\frac{\sums}{m}\right\rfloor} E_{\vec{s}} \right] \!, \ \label{Eq:TwoOutput:AntiPeriodic}
\end{eqnarray}
where we used the fact that for each value of $s = 0,\ldots,m-1$, there are $m^{n-1}$ lists of settings $\vec s$ such that $[\sums]_m = s$.
Any lower bound $\boundsf{}{n}{m}{2}{g.r}$ on $\GGUBIf{n}{m}{2}{g.r}$ will thus be related to a corresponding upper bound on the last sum of Eq.~\eqref{Eq:TwoOutput:AntiPeriodic} by an affine transformation.

In the case of biseparability in particular, we show in Appendix~\ref{App:Biseparable} how to determine the biseparable bound on Eq.~\eqref{Eq:TwoOutput:AntiPeriodic} for $n=3$. A biseparable bound for general $n$ can then be derived straightforwardly by invoking the recursive arguments employed in the previous subsections. This thus allows us to obtain, from Eq.~\eqref{DIEW_3m2}, the following two-output DIEWs:
\begin{equation}\label{Eq:DIEW}
\begin{split}
	& \GGUBIf{n}{m}{2}{g.r} \ \stackrel{\B}{\ge} \\
	& \frac{1}{2}m^{n-2} \! \left( \! m \! \sum_{s=0}^{m-1} \!\g(s) - \!\! \max_{j=0,\ldots,m-1} \! \left[ \eta_{j}
	\csc\frac{\eta_{j}\pi}{2m} \Big| \!\sum_{\s=0}^{m-1} \g(\s) \, \omega_j^{\s} \Big| \right] \right),
\end{split}
\end{equation}
where $\eta_{j}$ is the greatest common divisor of $2j+1$ and $m$, while $\omega_j=e^{i\frac{\pi}{m}\left(2j+1\right)}$.

\subsection{Three explicit examples}

We showed in the previous subsections how to derive multipartite bounds on the general  expression  $\GGUBIg$, from bi- or tri-partite bounds. Applying the above results to the more specific case of $\GUBIg =\GGUBIf{n}{m}{k}{f_{\cal I}}$, cf. Eq.~\eqref{Eq:f:GUBI}, we now derive three explicit examples of new Bell-like inequalities.

\medskip

1. For the  expression \blk  $\GUBIg$, we have the bipartite local bound $\bounds{\L}{2}{m}{k}=k-1$, cf. Eq.~\eqref{Eq:BKP}. Substituting this into Eq.~\eqref{Eq:Svet:Final}, we thus obtain the following Svetlichny inequality for arbitrary numbers of parties, inputs and outputs:
\begin{equation}\label{Ineq:SvetNMK}
	 \GUBIg \ \stackrel{\S}{\ge} \ m^{n-2} (k-1).
\end{equation}
The case $m=2$ of this  expression, previously derived in Ref.~\cite{BBGL}, is marked as Svetlichny-CGLMP in Fig.~\ref{Fig:GUBI}. Inequality~\eqref{Ineq:SvetNMK} represents the Svetlichny inequality for the vertex marked by ``?'' in the cube shown in Fig.~\ref{Fig:GUBI}.

\medskip

2. For binary outputs ($k=2$), since $[-r]_2 = [r]_2$, the function $f_{\cal I}(s,r)$ specified in Eq.~\eqref{Eq:f:GUBI} is of the form $f_{\cal I}(s,r)=g_{\cal I}(s) \cdot r$, with $g_{\cal I}(s) = 1$ if $s=0$ or 1, and $g_{\cal I}(s) = 0$ otherwise. For this choice, one gets $\sum_{s=0}^{m-1} \g_{\cal I}(s) = 2$ and $\left|\sum_{\s=0}^{m-1} \g_{\cal I}(\s) \, \omega_j^{\s}\right| = 2\left|\cos\tfrac{(2j+1)\pi}{2m}\right|$. Substituting these into Eq.~\eqref{Eq:DIEW} and after some computation\footnote{From~\eqref{Eq:DIEW}, one needs to calculate $\max_{j} \! \left[ \eta_{j} \csc\frac{\eta_{j}\pi}{2m} \left|\cos\tfrac{(2j+1)\pi}{2m}\right| \right]$. By decomposing (when $\eta_j \neq m$) the  expression  to maximize in the form $\left|\eta_{j} \cot\frac{\eta_{j}\pi}{2m}\right| \cdot \left|\frac{\cos\tfrac{(2j+1)\pi}{2m}}{\cos\frac{\eta_{j}\pi}{2m}}\right|$, the first absolute value is maximized for $\eta_{j}$ as small as possible, while the second one is upper bounded by 1. The maximum one needs to calculate is thus found to be $\cot\frac{\pi}{2m}$, obtained for $j=0$.}, one arrives at the following two-output DIEWs for arbitrary numbers of parties and inputs:
\begin{equation}\label{Ineq:DIEWNew}
	 \GUBI{n}{m}{2} \ \stackrel{\B}{\ge} m^{n-2}\left(m-\cot\frac{\pi}{2m}\right).
\end{equation}

\medskip

3. In a similar manner, it follows from the result of Ref.~\cite{S.Wehner:PRA:022110} that the Tsirelson bound for $\GUBI{2}{m}{2}$ is $\bounds{\Q}{2}{m}{2}=m\left(1-\cos\frac{\pi}{2m}\right)$. Substituting this into Eq.~\eqref{Ineq:Tsirelson:General} then gives the following $n$-partite, $m$-setting Tsirelson inequality:
\begin{equation}
	 \GUBI{n}{m}{2} \ \stackrel{\Q}{\ge} m^{n-1}\left(1-\cos\frac{\pi}{2m}\right).
\end{equation}

\subsection{Tightness of our inequalities}

Evidently, it is desirable to understand if the Tsirelson inequalities, Svetlichny inequalities and DIEWs derived using the above procedures can be saturated. Notice that a key common feature in these derivations involves Eq.~\eqref{Eq:S1}. Hence, the $n$-partite bound can be saturated only if all the $(n-1)$-partite bounds on the  expressions  $\GGUBI{n-1}{m}{k}^{(s_n, r_{s_n})}$ involved in Eq.~\eqref{Eq:S1} can be simultaneously saturated.

In general, one may thus expect that the $n$-partite bounds and hence the inequalities derived in Sections~\ref{Sec:Tsirelson}~--~\ref{Sec:DIEW} are not necessarily tight. Nonetheless, for all the examples that we have checked, all these bounds can indeed be saturated. For example, for the  expression  $\GUBIg$, both the Svetlichny bound $\bounds{\S}{n}{3}{2}=3^{n-2}$ and the biseparable bound $\bounds{\B}{n}{3}{2}=3^{n-2}(3-\sqrt{3})$ obtained above can be saturated~\cite{diew}; likewise, it can be verified\footnote{This can be done, for example, using the optimization tools of Ref.~\cite{Liang07} and the converging hierarchy of semidefinite programs discussed in Ref.~\cite{sdp-hierarchy}.} that the Tsirelson bounds for $\GUBIg$ satisfy $\bounds{\Q}{4}{2}{2}=2\bounds{\Q}{3}{2}{2}=4\bounds{\Q}{2}{2}{2} = 4 \cdot (2-\sqrt{2})$ whereas $\bounds{\Q}{4}{2}{3}\approx2\bounds{\Q}{3}{2}{3}\approx4\bounds{\Q}{2}{2}{3} \approx 4\times\left(3-\sqrt{\tfrac{11}{3}}\right)$.\footnote{These last set of equalities were only verified numerically, up to the numerical precision of $10^{-9}$.}

\section{Concluding remarks}
\label{Sec:Conclusion}

Starting from a unified Bell  expression  $\GUBIg$, we have shown that various important correlation Bell  expressions  can be recovered as special cases (cf. Fig.~\ref{Fig:GUBI}). A natural generalization of $\GUBIg$ to $\GGUBIg$ has, in turn, allowed us to also recover other known correlation Bell  expressions  that have previously been investigated in the literature.

Within the framework of $\GGUBIg$, we also demonstrated how multipartite Tsirelson inequalities, Svetlichny inequalities and device-independent witnesses for genuine multipartite entanglement (DIEWs) can be constructed. This, in particular, has allowed us to construct a new family of Svetlichny inequalities for arbitrary numbers of parties, inputs and outputs as well as a new family of two-output DIEWs that can be applied to a scenario involving arbitrary numbers of parties and inputs.

Clearly, a natural question that one may ask is how useful the (new) inequalities that can be constructed within this framework are. To this end, we note that inequality~\eqref{Ineq:SvetNMK} has recently also been discovered independently by Aolita {\em et al.}~\cite{Aolita:1109.3163} and used to show that the higher-dimensional $n$-partite Greenberger-Horne-Zeilinger (GHZ) states can exhibit fully random genuinely multipartite quantum correlations. 

The DIEW given in inequality~\eqref{Ineq:DIEWNew}, on the other hand, can also be shown to detect the genuine multipartite entanglement of a noisy GHZ state up to the same level of noise resistance (visibility) --- for any given $m$ and $n$ --- as those given in Ref.~\cite{diewTV}.  Finally, it is worth noting that the existing techniques for computing Tsirelson bounds (such as those discussed in Ref.~\cite{sdp-hierarchy}) generally do not work very well beyond small values of $n$ and/or $k$. Our general Tsirelson inequality~\eqref{Ineq:Tsirelson:General} may thus serve as a useful tool for characterizing and understanding the extent of nonlocality allowed in quantum theory. We believe that our inequalities and the framework from which they were constructed, given their generality and simplicity, have the potential for many other interesting applications.

Evidently, there are many open problems that stem from the present work. An obvious question that we have not addressed, for instance, is whether there is any choice of the function $f(s,r)$ for which the local bound on $\GGUBIg$ can be easily determined, and whether the resulting inequalities correspond to facets of the respective local polytopes.

As we already acknowledged, the framework that we have provided does not allow one to consider all possible full-correlation Bell-like inequalities. The  expression  $\GGUBIg$ defined in Eq.~(\ref{eq:GGUBI}) was constructed so that it has the nice property of being decomposable as in~(\ref{Eq:S1}), namely, as a sum of $m$ $(n{-}1)$-partite Bell  expressions  of the same form; there are however symmetric full-correlation  expressions  which cannot be written in such a way (see, e.g. Ref.~\cite{Liang10}). Besides, it could also be interesting to look at correlation Bell-like inequalities that do not have full symmetry with respect to permutation of parties. We shall leave these possibilities for future research.

\begin{acknowledgments}
We acknowledge useful discussions with Tam\'as V\'ertesi, Stefano Pironio and Antonio Ac\'in. This work is supported by the UK EPSRC, a UQ Postdoctoral Research Fellowship, the Swiss NCCR ``'Quantum Photonics'',  the Spanish MICINN through CHIST-ERA DIQIP, and the European ERC-AG QORE.
\end{acknowledgments}

\appendix

\section{Reduction of $\GUBIg$ to known Bell  expressions }
\label{App:Reduction}

In this Appendix, we show that by appropriate relabelling of inputs and outputs, and possibly by applying some affine transformation, $\GUBIg$ reduces to the respective Bell  expressions  given in Fig.~\ref{Fig:GUBI}.

We start by noting that $\GUBIg$ can alternatively be written using the bracket notation introduced in Ref.~\cite{Acin06} via the average values $\langle R \rangle = \sum_{r=0}^{k-1} r P(R=r)$:
\begin{equation}\label{eq:GUBI2}
\begin{split}
	 & \hspace{-2mm} \GUBIg \\
	 & = \! \sum_{\stackrel{\vec{s} \ {\mathrm{s.t.}}}{\mod{\sums}{m}=0}} \!\! \left\langle \left[\sumr - \! \left\lfloor\frac{\sums}{m}\right\rfloor\right]_k \right\rangle + \!\! \sum_{\stackrel{\vec{s} \ {\mathrm{s.t.}}}{\mod{\sums}{m}=1}} \!\! \left\langle \left[-\sumr + \! \left\lfloor\frac{\sums}{m}\right\rfloor\right]_k \right\rangle .
\end{split}
\end{equation}

\subsection{Reduction to known two-party Bell  expressions }
\label{Sec:BKP}

For the case of $n=2$, $\GUBIg$ simplifies to 
\begin{equation}\label{Eq:Omega:2mk}
\begin{split}
\GUBI{2}{m}{k} &= \sum_{\mod{x+y}{m}=0} \left\langle \left[ A_x+B_y -\left\lfloor\frac{x+y}{m}\right\rfloor\right]_k \right\rangle \\
			 &\quad+ \sum_{\mod{x+y}{m}=1} \left\langle \left[-A_x-B_y +\left\lfloor\frac{x+y}{m}\right\rfloor\right]_k \right\rangle,
\end{split}
\end{equation} 
where for ease of comparison with the notation adopted in Ref.~\cite{BKP}, we have written $s_1=x$, $s_2=y$, $r_{s_1}=A_x$ and $r_{s_2}=B_y$. Introducing the new output variables $B_0' = [-B_0]_k$ and $B_y' = [1-B_{m-y}]_k$ for $y \geq 1$, the above expression becomes
\begin{eqnarray}
\GUBI{2}{m}{k} &\ = \ & \left\langle[A_0+B_0]_k\right\rangle+\sum_{x=1}^{m-1}\left\langle\left[A_{x}+B_{m-x}-1 \right]_k\right\rangle \nonumber \\
&& +\left\langle[-A_0-B_1]_k\right\rangle+\left\langle[-A_1-B_0]_k\right\rangle \nonumber \\
&& \qquad +\sum_{x=2}^{m-1}\left\langle\left[-A_x-B_{m+1-x}+1\right]_k\right\rangle, \\
&\ = \ & \sum_{x=0}^{m-1}\left\langle\left[A_{x}-B_{x}' \right]_k\right\rangle +\sum_{x=1}^{m-1}\left\langle\left[B_{x-1}'-A_x\right]_k\right\rangle \nonumber \\
&& \qquad \qquad +\left\langle[B_{m-1}'-A_0-1]_k\right\rangle ,
\end{eqnarray} 
which is precisely the Bell expression due to BKP~\cite{BKP}.

\subsection{Reduction to known two-input Bell  expressions }
\label{Sec:SvetCGLMP}

For the case with two inputs, i.e., $m=2$, all terms with all possible combinations of inputs appear in $\GUBI{n}{2}{k}$:
\begin{equation}
	\GUBI{n}{2}{k} \, = \sum_{\vec{s}} \ \left\langle \left[(-1)^\sums\left( \sumr \, - \left\lfloor\frac{\sums}{2}\right\rfloor\right)\right]_k \right\rangle \,.
\end{equation}
Defining the new output variables $r_{s_1}'=\mod{r_{s_1}-s_1+1}{k}$ and $r_{s_i}'=\mod{r_{s_i}-s_i}{k}$ for all $i=2,\ldots,n$, as well as the new sum $\sumr'=\sum_{i=1}^{n} r_{s_i}'$, so that $[\sumr - \left\lfloor\frac{\sums}{2}\right\rfloor]_k = [\sumr' + \sums - 1 - \left\lfloor\frac{\sums}{2}\right\rfloor]_k = [\sumr' + \left\lfloor\frac{\sums-1}{2}\right\rfloor]_k$, we can rewrite $\GUBI{n}{2}{k}$ as:
\begin{eqnarray}
	\GUBI{n}{2}{k} &\ = \ & \sum_{\vec{s}} \ \left\langle \left[(-1)^\sums\left( \sumr' + \left\lfloor\frac{\sums-1}{2}\right\rfloor\right)\right]_k \right\rangle \,, \quad
\end{eqnarray}
which is precisely, up to a change of notations ($r_{s_j=0}' \leftrightarrow a_j', r_{s_j=1}' \leftrightarrow a_j$), the $n$-partite Svetlichny-CGLMP Bell  expression  of Ref.~\cite{BBGL} (see also Ref.~\cite{svet:chinese}): one can indeed check that $\GUBI{2}{2}{k}$ is the same as $S_{2,d}$ in Eq.~(6) of~\cite{BBGL}, and that $\GUBI{n}{2}{k}$ satisfies the recursive rules of Eq.~(7) and Eq.~(9) of~\cite{BBGL} (note that the terms with $(-1)^\sums = 1$, respectively $-1$, in the sum above correspond to the terms denoted as $[\ldots]$ and $[\ldots]^*$ in~\cite{BBGL}).

\subsection{Reduction to known two-output Bell  expressions }

In the case of binary outputs, by applying Eq.~\eqref{Eq:TwoOutput:AntiPeriodic} to $\GUBI{n}{m}{2}$, one finds that $\GUBI{n}{m}{2}$ is equivalent to $\sum_{\vec{s}, [\sums]_m=0,1} (-1)^{\left\lfloor\frac{\sums}{m}\right\rfloor} E_{\vec{s}}$. For $m=3$, this is precisely the DIEW introduced in Ref.~\cite{diew}.

More generally, for $k=2$ and when $f(s,r)$ is of the form $g(s) \cdot r$, one finds from Eq.~\eqref{Eq:TwoOutput:AntiPeriodic} that $\GGUBIf{n}{m}{2}{g.r}$ is equivalent to a symmetric full-correlation Bell expression which is characterized by coefficients of the form $\g([\sums]_m) (-1)^{\lfloor\frac{\sums}{m}\rfloor}$, i.e., a discrete function of $\sums$ that is {\em antiperiodic} with antiperiod $m$. This is a characteristic shared by several previously known Bell  expressions; in particular, the MABK Bell inequalities~\cite{MerminIneq}, and the DIEWs discussed in Ref.~\cite{diewTV} can also be recovered from $\GGUBIf{n}{m}{k}{g.r}$ (see Table~\ref{tbl:KnownBI}).

\section{Computing the tripartite biseparable bound of Eq.~(\ref{Eq:TwoOutput:AntiPeriodic})}
\label{App:Biseparable}

From the definition of a biseparable bound, it follows that the quantum biseparable upper bound on the last sum of Eq.~\eqref{Eq:TwoOutput:AntiPeriodic} can be written explicitly as (cf. Appendix~B in the Suppl. Mat. of Ref.~\cite{diew} 
\begin{equation}\label{Eq:BSBound}
	\max_{A_x=\pm1}	\max_\rho
	\sum_{x,y,z} \g([x{+}y{+}z]_m) (-1)^{\lfloor\frac{x{+}y{+}z}{m}\rfloor} A_x \langle \hat B_y\otimes \hat C_z \rangle_\rho \,,
\end{equation}
where $\rho$ is any quantum state shared by two parties, Bob and Charlie, and $\hat B_y$, $\hat C_z$ are quantum observables that satisfy $\hat B_y^2=\unit$, $\hat C_z^2=\unit$. That is, the required biseparable bound is the Tsirelson bound for a bipartite Bell inequality between Bob and Charlie with coefficients defined by 
\begin{equation}\label{Eq:EffectiveBipartiteCoefficeints}
	M_{yz}^{\{\!A_x\!\}} = \sum_{x=0}^{m-1} \g([x{+}y{+}z]_m) (-1)^{\lfloor\frac{x{+}y{+}z}{m}\rfloor} A_x,
\end{equation}
but further maximized over all possible choices of the third party (Alice's) strategies $A_x=\pm 1$. It thus follows that a (not necessarily tight) biseparable bound on Eq.~\eqref{Eq:TwoOutput:AntiPeriodic} can be obtained by solving the semidefinite program formalized in Ref.~\cite{S.Wehner:PRA:022110} and optimizing over the choices of $A_x$.

To this end, let us follow Ref.~\cite{diewTV} and construct a $m \times m$ matrix $M^{\{\!A_x\!\}}$ with coefficients given by Eq.~\eqref{Eq:EffectiveBipartiteCoefficeints}, but with $z$ replaced\footnote{This corresponds to a relabelling of the input of Charlie, which does not change the biseparable bound of the Bell expression.} by $m{-}1{-}z$ (here, $y$ and $z$ represent, respectively, the row and column indices of $M^{\{\!A_x\!\}}$). By the weak duality of semidefinite programs~\cite{sdp}, and from the results of Ref.~\cite{S.Wehner:PRA:022110}, it can be shown that an upper bound on the Tsirelson bound of any bipartite, $m$-input, 2-output, Bell correlation inequality with coefficients defined by $M^{\{\!A_x\!\}}$ is given by $m$ times the {\em largest singular value} of $M^{\{\!A_x\!\}}$.

Note that the matrix $M^{\{\!A_x\!\}}$ thus constructed from Eq.~\eqref{Eq:EffectiveBipartiteCoefficeints} is a Toeplitz matrix (more precisely, a ``modified circulant matrix''~\cite{diewTV,Toepliz}) with (orthogonal) eigenvectors $v_j=\left(1,\omega_j,\ldots, \omega_j^{m-1}\right)$ and corresponding eigenvalues
\begin{equation}\label{Eq:sj}
	\lambda_j^{\{\!A_x\!\}} = \sum_{x=0}^{m-1} A_x\omega_j^{x}\,\sum_{\s=0}^{m-1} \g(\s)\omega_j^{m-1-\s} \,,
\end{equation}
for $\omega_j=e^{i\frac{\pi}{m}\left(2j+1\right)}$.
Furthermore, one can show that $M^{\{\!A_x\!\}}$ is normal, and therefore its singular values are given by the absolute values of its eigenvalues. The desired biseparable bound can then be obtained by computing $\max_{j=0,\ldots,m-1}\max_{A_x=\pm1} \left|\lambda_j^{\{\!A_x\!\}}\right|$, which can be achieved using the following Lemma.

\begin{lemma}\label{Lemma:MaxOverA}
For a given integer $j = 0,\ldots,m{-}1$, let $\eta_{j}$ be the greatest common divisor of $2j+1$ and $m$; then 
\begin{equation}\label{Eq:MaxOverA}
	\max_{A_x=\pm1} \left|\sum_{x=0}^{m-1} A_x\omega_j^{x}\right|
	= \eta_{j} \, \csc\frac{\eta_{j}\pi}{2m}.
\end{equation}
\end{lemma}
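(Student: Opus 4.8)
The plan is to evaluate the modulus $\bigl|\sum_{x=0}^{m-1} A_x \omega_j^x\bigr|$ directly by exploiting the fact that $\omega_j = e^{i\pi(2j+1)/m}$ is a root of unity. Set $\zeta = \omega_j^2 = e^{2\pi i (2j+1)/m}$, an $m$-th root of unity whose multiplicative order equals $m/\eta_j$, where $\eta_j = \gcd(2j+1,m)$. The key structural observation is that $\omega_j^m = e^{i\pi(2j+1)} = -1$, so the powers $\omega_j^x$ are \emph{antiperiodic} with period $m/\eta_j$ in a suitable sense; more precisely, $\omega_j^{x + m/\eta_j}$ differs from $\omega_j^x$ by the fixed phase $\omega_j^{m/\eta_j} = e^{i\pi(2j+1)/\eta_j}$, and since $(2j+1)/\eta_j$ is odd this phase is a primitive $2\eta_j$-th root of $1$ raised to an odd power. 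The first step is therefore to group the $m$ terms into $\eta_j$ residue classes modulo $m/\eta_j$ and to understand how the phases repeat within and across classes.

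Second, I would carry out the optimization over the signs $A_x = \pm 1$. Since $\bigl|\sum_x A_x \omega_j^x\bigr|$ is maximized by aligning each contribution as closely as possible with a common direction, the optimal choice is $A_x = \operatorname{sgn}\bigl(\cos(\theta - \arg \omega_j^x)\bigr)$ for the best global phase $\theta$; but because the phases $\arg\omega_j^x = \pi(2j+1)x/m$ are equally spaced points on the circle (each class contributing $m/\eta_j$ equally spaced phases, and the $\eta_j$ classes interleaving), the sign-optimized sum factorizes. Concretely I expect the maximum to reduce, after choosing the $A_x$ to make every summand land in a single half-plane aligned with the optimal axis, to a sum of the form $\eta_j \sum_{\ell=0}^{m/\eta_j - 1}\bigl|\cos(\cdots)\bigr|$ or equivalently a geometric-type sum of moduli of the $2m$-th roots of unity spaced by $\pi\eta_j/m$. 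The third step is then to evaluate this trigonometric sum in closed form; the identity $\sum_{\ell} \bigl|\cos\bigl((2\ell+1)\alpha\bigr)\bigr|$ over a full period telescopes to a cosecant via the standard Dirichlet-kernel-type summation, yielding precisely $\eta_j \csc\frac{\eta_j \pi}{2m}$.

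An alternative and perhaps cleaner route, which I would pursue in parallel, is to note that $\sum_{x=0}^{m-1} A_x \omega_j^x$ with $A_x \in \{\pm 1\}$ ranges over a lattice of complex numbers, and the maximum modulus is attained at the extreme point where the partial sums spiral outward monotonically; writing $A_x \omega_j^x$ as a step in the complex plane, the optimal walk makes each step have nonnegative projection onto the growing resultant. Because $\omega_j$ has order $2m/\eta_j$ as a $2m$-th root of unity, the $m$ available directions $\omega_j^x$ cover $\eta_j$ full rotations' worth of a regular $(2m/\eta_j)$-gon, and the optimal signed sum is $\eta_j$ copies of the one-rotation maximum, which is the well-known value $\csc\frac{\eta_j\pi}{2m}$ for a regular polygon of half-angle $\frac{\eta_j\pi}{2m}$.

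The main obstacle I anticipate is the bookkeeping in the sign optimization when $\eta_j > 1$: one must verify that the $\eta_j$ interleaved families of phases can be \emph{simultaneously} aligned by a single admissible choice of signs, i.e. that the greedy half-plane choice is globally optimal rather than merely locally so, and that the resulting optimal axis passes between two of the phase directions in exactly the symmetric way that produces the clean cosecant. Establishing that the $\gcd$ governs the multiplicity cleanly — so that the answer scales as $\eta_j$ times the $\eta_j = 1$ base case evaluated at angle $\eta_j\pi/(2m)$ — is the crux; once the factor of $\eta_j$ and the reduction to a single regular-polygon sum are justified, the closed-form evaluation is a routine application of the finite-sum identity $\sum_{t=0}^{N-1}\cos\bigl((2t+1)\beta\bigr) = \frac{\sin(2N\beta)}{2\sin\beta}$ together with a sign analysis to remove the absolute values.
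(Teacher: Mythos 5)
Your plan follows the same strategy as the paper's own proof: view each term $A_x\omega_j^x$ as a phase vector among the $2m$-th roots of unity, use the signs to fold all vectors into a common half-plane, argue that the gcd $\eta_j$ turns the folded vectors into an equally spaced configuration with multiplicity $\eta_j$, and evaluate the resulting sum to get $\eta_j\csc\frac{\eta_j\pi}{2m}$. However, there is a genuine gap: the combinatorial core of the argument is, by your own admission, left open (``the main obstacle I anticipate \ldots is the crux''), and where you do sketch it, the structure is stated backwards. Grouping $x=0,\ldots,m-1$ by residue modulo $m/\eta_j$ produces $m/\eta_j$ classes of size $\eta_j$ (not ``$\eta_j$ residue classes''), and within one such class the vectors do \emph{not} spread into equally spaced phases: they are all collinear, since $\omega_j^{x_0+t\,m/\eta_j}=(-1)^t\omega_j^{x_0}$ because $\omega_j^{m/\eta_j}=e^{i\pi(2j+1)/\eta_j}=-1$ (the exponent $(2j+1)/\eta_j$ being an odd integer). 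The classes do not ``interleave'': each contributes a single direction (up to sign) with multiplicity $\eta_j$, and it is the $m/\eta_j$ \emph{distinct} classes that supply the equally spaced directions. Since the factor $\eta_j$ and the angle $\eta_j\pi/(2m)$ in the final answer come precisely from this structure, the proposal as written is an accurate plan but not yet a proof.

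The paper closes exactly this gap with one explicit observation. With the half-plane sign choice, $A_x\omega_j^x=\omega_0^{\ell_{j,x}}$ where $\omega_0=e^{i\pi/m}$ and $\ell_{j,x}=(2j+1)x \bmod m$. Because $\gcd(2j+1,m)=\eta_j$, one has $\ell_{j,x}=\ell_{j,x'}$ iff $x\equiv x'\ (\mathrm{mod}\ m/\eta_j)$, so as $x$ runs over $0,\ldots,m-1$ the quantity $\ell_{j,x}$ takes exactly the $m/\eta_j$ values $\{0,\eta_j,2\eta_j,\ldots,m-\eta_j\}$, each attained exactly $\eta_j$ times. The folded sum is therefore exactly $\eta_j\sum_{t=0}^{m/\eta_j-1}\omega_0^{t\eta_j}$, a plain geometric series of modulus $2/\lvert 1-e^{i\pi\eta_j/m}\rvert=\csc\frac{\eta_j\pi}{2m}$; no Dirichlet-kernel identity or sign analysis of $\lvert\cos\rvert$ sums is needed. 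Your two stated worries also dissolve once this is in place: (i) there is no ``simultaneous alignment'' constraint, because each of the $m$ vectors carries its own independent sign, so folding every one of them into the upper half-plane is a single admissible choice of $\{A_x\}$; (ii) global optimality of that greedy choice follows from $\max_{A_x=\pm1}\bigl\lvert\sum_x A_x\omega_j^x\bigr\rvert=\max_\theta\sum_x\bigl\lvert\cos(\arg\omega_j^x-\theta)\bigr\rvert$, together with the fact that for every folding axis $\theta$ the folded multiset is the same equally spaced configuration up to rotation (a half-plane window over the $(2m/\eta_j)$-th roots of unity, each carried with multiplicity $\eta_j$, always captures $m/\eta_j$ consecutive directions spaced by $\eta_j\pi/m$); hence every projection is bounded by the common resultant modulus $\eta_j\csc\frac{\eta_j\pi}{2m}$, which the upper-half-plane choice attains.
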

\begin{proof}
To prove this, first note that each $\omega_j^x$ is a $2m$-root of unity and can therefore be understood as a phase vector on the complex plane. The above optimization over $A_x$ is thus simply a maximization of the magnitude of (the vectorial sum) $\sum_x A_x\omega_j^{x}$, which can be achieved by concentrating $A_x\omega_j^x$ as much as possible, at most on half a plane. Hence, an {\em optimal} choice of $A_x$ corresponds to setting $A_x=1$ when the argument of $\omega_j^x$ is in $[0,\pi)$, and $A_x=-1$ when its argument is in $[\pi,2\pi)$.

It then follows from the definition of $\omega_j$ that $A_x\omega_j^x=\omega_0^{\ell_{j,x}}$ where $\ell_{j,x}= (2j{+}1)x\,\, \text{mod}\,\, m$. Moreover, as $x$ increases from 0 to $m-1$ in steps of 1, the integer $\ell_{j,x}$ \emph{is never repeated} until $x$ hits $\tfrac{m}{\eta_{j}}$, in which case $\ell_{j,x}=\ell_{j,0}=0$. Next, note that $2j{+}1$ and $m$ are both integer multiples of $\eta_{j}$, it thus follows that $\ell_{j,x}$ must also be an \emph{integer multiple} of $\eta_{j}$. This, together with the fact that there are $\tfrac{m}{\eta_{j}}$ distinct values of $\ell_{j,x}$ as $x$ varies from 0 to $\tfrac{m}{\eta_{j}}-1$ implies that we must have
\begin{equation}
	\{\ell_{j,x}\}_{x=0}^{\tfrac{m}{\eta_{j}}-1}=\{0,\eta_{j},2\eta_{j},\ldots, m{-}\eta_{j}\} = \{x\eta_j\}_{x=0}^{\tfrac{m}{\eta_{j}}-1}.
\end{equation}	
Geometrically, this means that all neighboring phase vectors in the set $\{A_x\omega_j^x\}_{x=0}^{\tfrac{m}{\eta_{j}}-1}$ are equally spaced.

Finally, note that because $A_{\tfrac{m}{\eta_{j}}}\omega_j^{\tfrac{m}{\eta_{j}}}=1$, the phase vectors $A_x\omega_j^x$ for larger values of $x$ will be identical to those with $0\le x\le \tfrac{m}{\eta_{j}}$. Bearing all these in mind, the left-hand-side of Eq.~\eqref{Eq:MaxOverA} can now be evaluated to give
\begin{align}
	\max_{A_x=\pm1} \left|\sum_{x=0}^{m-1} A_x\omega_j^{x}\right| = \eta_{j} \left|\sum_{x=0}^{\frac{m}{\eta_{j}}-1} \omega_0^{x\eta_{j}}\right| = \eta_{j} \csc\frac{\eta_{j}\pi}{2m},
\end{align}
thus completing the proof of Lemma~\ref{Lemma:MaxOverA}.
\end{proof}

Putting all these together, and noting that $\left| \sum_{\s=0}^{m-1} \g(\s) \, \omega_j^{m-1-\s} \right| = \left| \sum_{\s=0}^{m-1} \g(\s) \, \omega_j^{\s} \right|$, we thus see that for $n=3$, the last sum in Eq.~\eqref{Eq:TwoOutput:AntiPeriodic} admits a biseparable (upper) bound of:
\begin{equation}
	m \times \max_j \left[ \eta_{j} \, \csc\frac{\eta_{j}\pi}{2m} \, \Big| \sum_{\s=0}^{m-1} \g(\s) \, \omega_j^{\s} \Big| \right],
\end{equation}
implying
\begin{equation} \label{DIEW_3m2}
\begin{split}
	& \GGUBIf{3}{m}{2}{g.r} \ \stackrel{\B}{\ge} \\
	& \frac{1}{2} m \! \left( m \! \sum_{s=0}^{m-1} \!\g(s) - \max_j \! \left[ \eta_{j}
	\csc\frac{\eta_{j}\pi}{2m} \Big| \!\sum_{\s=0}^{m-1} \g(\s) \, \omega_j^{\s} \Big| \right] \right). \quad
\end{split}
\end{equation}

\end{document}